\newtheorem{lem}{Lemma}
\newtheorem{ass}{Assumption}
\newtheorem{theorem}{Theorem}
\newtheorem{rem}{Remark}
\def\mb{\mathbf}
\def\mc{\mathcal}
\def\mb{\mathbf}
\def\mc{\mathcal}
\DeclareMathOperator*{\argmin}{argmin}
\DeclareMathOperator*{\argmax}{argmax}
\journal{Systems and Control Letters}
\begin{document}

\begin{frontmatter}

\title{Momentum-based Distributed Resource Scheduling Optimization Subject to Sector-Bound Nonlinearity and Latency
}

\author[Sem]{Mohammadreza Doostmohammadian}
\affiliation[Sem]{Mechatronics Group, Faculty of Mechanical Engineering, Semnan University, Semnan, Iran, and Center for International Scientific Studies and Collaborations, Tehran, Iran, doost@semnan.ac.ir.}

\author[kaz]{ Zulfiya R. Gabidullina}
\affiliation[kaz]{Institute of Computational Mathematics and Information Technologies, Kazan Federal University, Russia, Zulfiya.Gabidullina@kpfu.ru.}

\author[HR]{ Hamid R. Rabiee}
\affiliation[HR]{Computer Engineering Department, Sharif University of Technology, Tehran, Iran,
	rabiee@sharif.edu.}

\begin{abstract}
This paper proposes an accelerated consensus-based distributed iterative algorithm for resource allocation and scheduling.  The proposed gradient-tracking algorithm introduces an auxiliary variable to add momentum towards the optimal state. We prove that this solution is all-time feasible, implying that the coupling constraint always holds along the algorithm iterative procedure; therefore, the algorithm can be terminated at any time. This is in contrast to the ADMM-based solutions that meet constraint feasibility asymptotically. Further, we show that the proposed algorithm can handle possible link nonlinearity due to logarithmically-quantized data transmission (or any sign-preserving odd sector-bound nonlinear mapping).
We prove convergence over uniformly-connected dynamic networks (i.e., a hybrid setup) that may occur in mobile and time-varying multi-agent networks. Further, the latency issue over the network is addressed by proposing delay-tolerant solutions. To our best knowledge, accelerated momentum-based convergence, nonlinear linking, all-time feasibility, uniform network connectivity, and handling (possible) time delays are not \textit{altogether} addressed in the literature. These contributions make our solution practical in many real-world applications.
\end{abstract}

\begin{graphicalabstract}
	\includegraphics{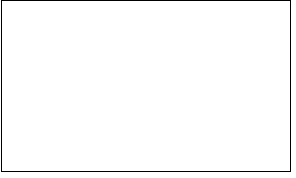}
\end{graphicalabstract}

\begin{highlights}
	\item Introduction of a novel accelerated momentum-based distributed iterative algorithm specifically designed for optimal resource allocation and scheduling tasks.
	\item Demonstration that the proposed algorithm maintains all-time feasibility, ensuring that the coupling constraint is satisfied at every iteration, allowing for termination at any point without violating the constraints.
	\item Capability to manage link nonlinearity arising from logarithmically-quantized data transmission or other sign-preserving odd sector-bound nonlinear mappings.
	\item  Proof of convergence over uniformly-connected dynamic networks, accommodating the complexities of mobile and time-varying multi-agent systems.
	\item Addressing latency issues in network communication by proposing solutions that are tolerant to delays, enhancing practical applicability.
\end{highlights}

\begin{keyword}
			Distributed constrained optimization\sep consensus\sep graph theory\sep convex analysis\sep resource scheduling and allocation
\end{keyword}

\end{frontmatter}

\section{Introduction} \label{sec_intro}
Decentralized multi-agent optimization and learning has been significantly advanced due to recent developments in cloud computing, parallel processing, Internet-of-Things (IoT) and applications in machine learning, artificial intelligence, and distributed data processing \cite{de2019survey,heydaribeni2019distributed}. This motivates decentralized setups toward optimizing the resource scheduling on large-scale, where the process of learning is distributed over a network of computing nodes/agents \cite{2025survey} (see Fig.~\ref{fig_distributed}).
The challenges posed by the dynamic workloads, data-transmission latencies, volatile networking conditions, and nonlinearities (such as quantization and clipping) within the system ask for innovative solutions that can rapidly adapt to the intricacies of modern computing problems. This motivates to bridge the gap and devise a solution that can operate effectively and rapidly in the face of different local and coupling constraints and ensure the reliability of distributed applications over networked setups.
In response to the mentioned challenges, this study introduces a momentum-based algorithm that not only addresses the intricacies of existing constraints but also provides a dynamic localized framework for resource allocation and scheduling. The motivation is to offer a distributed strategy that collaboratively optimizes resource utilization in the presence of latency, and effectively goes through the nonlinearities inherent in real-world computing setups while addressing fast convergence.
\begin{figure}[]
\centering
\includegraphics[width=3.3in]{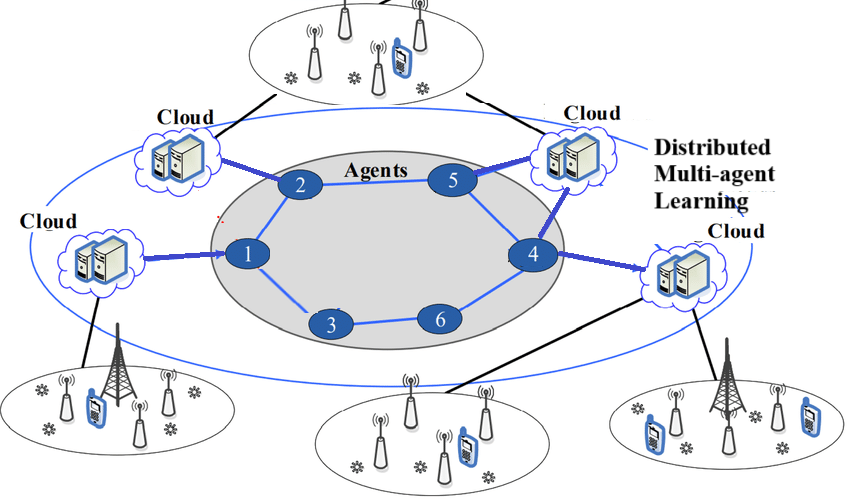}
\caption{A distributed networked multi-agent optimization/learning setup motivated by cloud-based decentralized computing and parallel processing.  }
\label{fig_distributed}
\end{figure}

Some relevant literature is discussed here. The optimal resource-constrained problem in centralized setups using transfer learning is considered in \cite{askarizadeh2023resource}. The preliminary works on distributed linear resource scheduling algorithms are discussed in \cite{boyd2006optimal,shames2011accelerated}. Distributed optimization application by combining finite-time observers and homogeneous system theory with control design application is considered in \cite{wang2022distributed}. Distributed nonlinear dynamics for optimization using single-bit consensus protocols (to reduce communication load on agents) is proposed in \cite{taes}.
Optimal energy scheduling for plug-in electric vehicle (PEV) charging is discussed in \cite{falsone2020tracking,vujanic2016decomposition}. Energy resource allocation over the power grid (referred to as the economic dispatch problem) is considered using different consensus-based \cite{chen2016distributed,cherukuri2015distributed,liu2021distributed}, event-driven \cite{jena2023securing}, and adaptive dynamic programming methods \cite{venayagamoorthy2016dynamic}. Distributed multi-robot task assignment (and allocation) is considered in  \cite{su2018task,zhang2023energy,MiadCons}.  Momentum-based solution for \textit{unconstrained} optimization with machine learning applications is studied in \cite{tase,carnevale2022gtadam,xin2019distributed,okamoto2023distributed}.
Distributed consensus coordination and optimization under time-delay is studied in \cite{jin2023new,liu2022distributed,wang2022distributed2}.
However, these works mostly require all-time network connectivity (i.e., cannot handle temporary disconnectivity), and further cannot address all-time resource-demand feasibility while unable to handle nonlinear data-transmission (e.g., due to log-quantization). Quantized communication over all-time connected networks (with no all-time constraint feasibility) is studied in \cite{RIKOS20235900}. All-time feasible methods are proposed in \cite{wu2021new,cherukuri2015distributed} without consideration of communication delays and nonlinearity.
Alternating-direction-method-of-multipliers (ADMM) resource allocation solutions \cite{carnevale2022gtadam,jian2019distributed,banjac2019decentralized,jiang2022distributed,cdc_dtac} are based on primal-dual optimization. These works are known to reach constraint-feasibility asymptotically over time and therefore are not all-time feasible. Further, they only converge on all-time connected networks and are mostly vulnerable to time-delays.  Another interesting line of research is additive-	increase-multiplicative-decrease (AIMD) strategies used in various networking and resource management scenarios, particularly for controlling congestion in data networks \cite{AIMD_book}; the literature includes average-based AIMD admission control policy with minimal communication between individual nodes and the aggregator \cite{AIMD2} and its application in charging electric and plug-in hybrid vehicles \cite{Stdli2012AIMDlikeAF} and optimal scheduling of the charging process of electric bus fleets \cite{AIMD1}. Finally, this paper improves the previous work of the authors \cite{scl} by adding a momentum term to reach faster convergence.

This paper introduces a novel momentum-based algorithm designed to tackle the mentioned challenges associated with distributed resource allocation and scheduling, offering an adaptive solution to the dynamic nature of the existing applications. The main contributions are summarized as follows:
\begin{enumerate}[(i)]
\item Our proposed algorithm operates over uniformly-connected networks, where the agents may lose connectivity temporarily because of mobility or packet drops. This exhibits a remarkable ability to converge under time-varying and relaxed network-connectivity conditions. This advances the existing ADMM-based \cite{carnevale2022gtadam,jian2019distributed,banjac2019decentralized,jiang2022distributed,cdc_dtac} and delay-tolerant solutions \cite{wang2022distributed2,liu2022distributed}.
\item The algorithm holds exact convergence (with no optimality gap) in the face of possible nonlinearities commonly encountered in practical distributed systems, such as logarithmic quantization and saturation/clipping. In general, the proposed solution is robust subject to general sign-preserving sector-bound nonlinear mapping on the data-transmission channels. This adaptability enhances its applicability, from traditional computing clusters to emerging edge and fog computing paradigms. This contribution is not addressed in most existing literature.
\item The proposed solution is all-time constraint feasible, implying that at any termination time of the algorithm, the resource-demand balance holds. This contributes to its ability to handle dynamic setups and avoid service disruption. This particularly is in contrast to the dual-based ADMM  \cite{carnevale2022gtadam,jian2019distributed,banjac2019decentralized,jiang2022distributed,cdc_dtac} and other existing delay-tolerant solutions \cite{wang2022distributed2,liu2022distributed}.
\item By leveraging momentum-based techniques, our approach enables rapid convergence to optimal resource distribution. This makes it particularly well-suited for real-time scenarios, ensuring that system resources are optimally allocated sufficiently fast. This along with the all-time feasibility, enhances the algorithm's adaptability towards scenarios characterized by evolving workloads.
\item Another key strength of our proposed algorithm lies in its capability to handle latency and time-delay prevalent in data-transmission networks. This is mainly due to limited bandwidth and synchronization issues in the data-sharing network of processing nodes. Our approach ensures optimality even in situations characterized by variable heterogeneous time-delays over undirected symmetric networks. This feature is particularly crucial in real-world applications where instantaneous decision-making is imperative. This notion of network latency is rarely addressed in distributed scheduling literature. Few existing works \cite{wang2022distributed2,liu2022distributed} are not all-time feasible and cannot handle nonlinearity.
\end{enumerate}

To our best knowledge, no existing work in the literature addresses the contributions (i)-(v) altogether. Some works address one or two contributions (as discussed in the literature review), but not all the mentioned contributions entirely together.

\textit{Paper Organization:} Section~\ref{sec_pre} presents some preliminaries and formulated the problem. Section~\ref{sec_alg} presents our main algorithm for iterative resource scheduling. Section~\ref{sec_conv} discusses the constraint feasibility and optimal convergence. Section~\ref{sec_sim} provides some simulations and Section~\ref{sec_con} concludes the paper.

\textit{General Notations:} $\mb{1}_n$ and $I_n$ respectively denote the vector of all ones and identity matrix of size $n$. The operator ``$;$'' implies column concatenation. $\partial_x$ is simplified form of $\frac{d}{dx}$, i.e., the derivative with respect to $x$. $\nabla_x F$ is the gradient of $F(\cdot)$ with respect to $x$.

\section{Preliminaries and Problem Formulation} \label{sec_pre}	
\subsection{Problem Setup}
The resource scheduling and allocation problem is formulated as equality-constraint optimization in its primal form. It aims to minimize a globally convex objective function subject to the resource-demand constraint. This balancing constraint guarantees that the weighted sum of resources meets the sum of demands, and is referred to as the \textit{feasibility constraint}. Violating this constraint may cause service disruption in practical applications. The mathematical form of the problem is
\begin{equation} \label{eq_dra0}
\min_{\mb{z}}
~~  \widetilde{F}(\mb{z}) := \sum_{i=1}^{n} \widetilde{f}_i(z_i), ~~
\text{s.t.} ~  \mb{a}^\top\mb{z} - b = 0,
\end{equation}
with the column state vector $\mb{z} =[z_1;\dots;z_n] \in \mathbb{R}^n$ representing the resources ($z_i \in \mathbb{R}$ is the resource at node/agent $i$),
and positive weighting factor $\mb{a}=[{a}_1;\dots;{a}_n] \in \mathbb{R}^n_{+}$ and the demand vector $b \in \mathbb{R}$. The function $f_i: \mathbb{R} \mapsto  \mathbb{R}$ denotes the local objective/cost at node $i$.
\begin{ass}
The global cost function $\widetilde{F}(\cdot)$ is strictly convex and smooth.
\end{ass}
By simple change of variables $ x_i :=z_i a_i$, the problem takes the standard mathematical form as,
\begin{equation} \label{eq_dra}
\begin{aligned}
	\displaystyle
	\min_{\mb{x}}
	~~ & F(\mb{x}) = \sum_{i=1}^{n} f_i(x_i) \\
	\text{s.t.} ~~&  \sum_{i=1}^{n} (x_i -b_i) =0
\end{aligned}
\end{equation}
with $\mb{x} =[x_1;\dots;x_n] \in \mathbb{R}^n$ as the new state variable and the predefined parameter $b_i \in \mathbb{R}$ as the preset local demand at node $i$ (note that $\sum_{i=1}^{n} b_i = b$). For example, in power allocation and economic dispatch, the parameter $b_i$ represents the nominal power rating of the power generators and is a predefined value.
\textit{The above equality constraint represents the balance between the scheduled resources and the demand, which needs to be always satisfied in all-time feasible solutions.}
In the case that the resource states are further locally constrained by $m_i \leq x_i \leq M_i$ and $m_i \leq b_i \leq M_i$, the local objective $f_i(\cdot)$ is added with a penalizing additive term to address these box constraints \cite{bertsekas1975necessary,wu2021new}.
One example of penalty function is $\sigma [{x}_i - M_i]^+ + \sigma [m_i - {x}_i]^+$
with $[u]^+ = \max \{u,0\}^c$, $c \in \mathbb{Z}^+$.
Another smooth example is,
\begin{align} \label{eq_log_penalty}
\frac{\sigma}{\alpha} \log (1+e^{\alpha ({x}_i - M_i)}) + \frac{\sigma}{\alpha} \log (1+e^{\alpha (m_i - {x}_i)}),
\end{align}
with $\sigma,\alpha \in \mathbb{R}^+$.

\begin{rem}
	In the problem statement \eqref{eq_dra} the state variables are considered scalar as in many applications mentioned in Section \ref{sec_app}, including the energy resource management (generator coordination) and CPU scheduling, where the optimization state variable is scalar, for example, the allocated power in generator coordination and CPU workload in CPU scheduling over data centres. This scalar case is also considered in many works in the literature, for example, see references \cite{boyd2006optimal,shames2011accelerated,chen2016distributed,cherukuri2015distributed,scl,rikos2021optimal,grammenos2023cpu,liu2023distributed}. However, the results can be easily extended to the case where $x_i \in \mathbb{R}^p$ and $b_i \in \mathbb{R}^p$ with $p>1$.
\end{rem}

\subsection{Applications} \label{sec_app}
The applications of the optimization problem~\eqref{eq_dra} range from the control systems to computer networks. In these applications a group of agents collaboratively learn how to locally optimize the resource scheduling costs, as discussed in the following:
\begin{itemize}
\item An example is the problem of distributed CPU scheduling over networked data centres considered in \cite{rikos2021optimal,kwak2015dream,grammenos2023cpu}. In this problem, the idea is to optimally allocate tasks to a group of networked servers while optimizing the computational cost functions. The equality constraint balances the assigned CPU resources and the computational tasks.
\item Distributed energy resource management and generator coordination is another application considered in \cite{chen2016distributed,cherukuri2015distributed,venayagamoorthy2016dynamic,liu2023distributed} to distribute the electricity dispatch and power demand to a group of networked power generators. The resource-demand feasibility implies the balance between the assigned power resources to the load demand.
\item Battery scheduling for distributed PEV charging is studied in \cite{falsone2020tracking,xie2016fair,vujanic2016decomposition} to assign a fleet of PEVs to charging stations to optimize the electricity cost. The decision-making is distributed and localized to address large-scale frameworks.
\item  The problem of rate-distributed linearly constrained minimum variance (LCMV) beamforming in wireless acoustic sensor networks is another example in \cite{zhang2019distributed,koutrouvelis2018low}. The goal is to optimize the energy usage within the network and
the transmission cost while addressing the constraint on noise reduction performance.
\item Another application is in coverage control, facility location optimization, and distributed task allocation over mobile robotic networks \cite{MiadCons,gao2022coverage,MSC09,jin2016distributed}. The idea is to optimally assign a group of agents to maximize the mission area coverage while balancing duty-to-capability ratios.
\end{itemize}

\subsection{Graph Theoretic Notions} \label{sec_graph}
The multi-agent network is modelled by a graph topology $\mc{G} = \{\mc{V},\mc{E}\}$ with adjacency (weight) matrix $W$. Its associated Laplacian matrix is defined as $L=D-W$ with $D = \mbox{diag}[\sum_{j=1}^n W_{ij}]$ (the diagonal degree matrix).
Assuming the network to be undirected and $W$ to be symmetric, its  $L$ is positive semi-definite (PSD) \cite{SensNets:Olfati04}. This is the case when the agents have similar communication range and broadcasting power.
The graph is said to be weight-balanced (WB) if $\sum_{i=1}^n W_{ij}= \sum_{j=1}^n W_{ij}$. For WB networks, the matrix $L_s:=\frac{L+L^\top}{2}$ is PSD.
It is possible that the agents temporarily lose connectivity due to mobility or packet drops. In this case, the switching network $\mc{G}(t)$ is said to be \textit{uniformly connected} over time-window $B>0$ (or $B$-connected), if there is $B>0$ such that $\mc{G}_B(t)$ with edge union $\bigcup_{t}^{t+B}\mc{E}(t) $ is connected. In this case, the Laplacian $L_B$ associated with $\mc{G}_B$ only has one isolated zero eigenvalue \cite{godsil}. Note that this is the most relaxed network connectivity condition and contradicting this condition implies that, for example, there are two separate subgraphs (islands) in the network with no path between them, and therefore no collaborative agreement can be reached over a non-uniformly-connected network. This union connectivity is particularly vital for link-failure scenarios. In case there are packet drops or link failures with probability $p_l$, one can define a \textit{bond-percolation} probability $p_c$ over which the network loses its connectivity. In other words, $p_c$ serves as a critical threshold that separates the
two network connectivity and disconnectivity phases. For example, for Erdos-Renyi (ER) random graphs with linking density $p$ and link failure rate $p_l$, for $p_l<p_c$ the outcome network is known to be islanded (consists of two or more non-connected subgraphs) \cite{karonski1995random}. For ER networks of size $n$ this bond-percolation threshold is \cite{kawamoto2015precise,kitsak2010identification}
\begin{align} \label{eq_bond}
p_c = 1-\frac{1}{\overline{d}_{\mc{G}}}
\end{align}
with $\overline{d}_{\mc{G}}=\frac{p(n-1)}{2}$ as the average node degree.

\begin{ass}
	The network $\mc{G}(t)$ is (potentially) time-varying, weight-balanced
	(WB), and uniformly-connected over time.
\end{ass}

\subsection{Preliminary Lemmas and Auxiliary Results}

\begin{lem} \label{lem_z*}
\cite[Lemma~2.3]{cherukuri2015distributed} The state $\mb{x}^*$ as the optimizer of the problem~\eqref{eq_dra} satisfies $\nabla_x F(\mb{x}^*) \in \mbox{span}(\mb{1}_n)$.
\end{lem}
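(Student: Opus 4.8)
The plan is to invoke the first-order necessary optimality conditions for the equality-constrained convex program~\eqref{eq_dra}. First I would rewrite the single scalar constraint $\sum_{i=1}^n (x_i - b_i) = 0$ compactly as $\mb{1}_n^\top \mb{x} - b = 0$ with $b = \sum_{i=1}^n b_i$, and observe that its gradient with respect to $\mb{x}$ is the constant nonzero vector $\mb{1}_n$. Since there is only a single affine constraint with nonvanishing gradient, the constraint qualification holds at every feasible point, so a Lagrange multiplier is guaranteed to exist at the optimizer.

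Next I would form the Lagrangian $\mc{L}(\mb{x},\lambda) = F(\mb{x}) + \lambda(\mb{1}_n^\top\mb{x} - b)$ and write the stationarity condition $\nabla_x \mc{L}(\mb{x}^*,\lambda^*) = \nabla_x F(\mb{x}^*) + \lambda^* \mb{1}_n = 0$, which immediately gives $\nabla_x F(\mb{x}^*) = -\lambda^* \mb{1}_n \in \mbox{span}(\mb{1}_n)$, as claimed. By Assumption~1 the objective is strictly convex and smooth, so the stationary point is the unique global minimizer and these necessary conditions are also sufficient. Because $F$ is separable, this condition is equivalent to $f_1'(x_1^*) = \dots = f_n'(x_n^*)$, i.e., all local marginal costs coincide at optimality; I may note this interpretation since it motivates the gradient-tracking design in the sequel.

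An alternative, multiplier-free route I could take is variational: any feasible perturbation direction $\mb{d}$ at $\mb{x}^*$ satisfies $\mb{1}_n^\top\mb{d}=0$, and since both $\mb{d}$ and $-\mb{d}$ preserve feasibility, first-order optimality forces $\nabla_x F(\mb{x}^*)^\top\mb{d}=0$ for every $\mb{d}\perp\mb{1}_n$; hence $\nabla_x F(\mb{x}^*)$ is orthogonal to the hyperplane $\{\mb{1}_n\}^\perp$ and therefore lies in $\mbox{span}(\mb{1}_n)$. I expect no genuine obstacle here; the only point requiring minor care is confirming the constraint qualification so that the Lagrange multiplier exists, but with a single affine constraint and gradient $\mb{1}_n \neq 0$ this is automatic.
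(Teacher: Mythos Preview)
Your proposal is correct and follows exactly the route the paper takes: the paper simply states that the result follows directly from the KKT conditions on the affine constraint $\sum_{i=1}^n (x_i-b_i)=0$, and your Lagrangian/stationarity argument is precisely that, only spelled out in more detail. Your multiplier-free variational alternative is also sound but not needed here.
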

The proof of the above lemma directly follows the Karush-Kuhn-Tucker (KKT) conditions on the affine constraint $\sum_{i=1}^{n} (x_i -b_i) =0$.

\begin{lem}  \label{lem_xLy}
Let  matrix $W$ be symmetric, ${\overline{\mb{x}} := \mb{x} - \frac{\mb{1}_n^\top \mb{x}}{n} \mb{1}_n}$, and $g_l: \mathbb{R} \mapsto \mathbb{R}$ be a sign-preserving odd nonlinear mapping with sector bounds $\kappa \leq \frac{g_l(x_i)}{x_i} \leq \mc{K}$ ($\kappa,\mc{K} \in \mathbb{R}^+$).
Then,
\begin{align}    \label{eq_laplace}
	\lambda_2 \|\overline{\mb{x}} \|_2^2 \leq &\mb{x}^\top L \mb{x} = \overline{\mb{x}}^\top L \overline{\mb{x}} \leq \lambda_n \|\overline{\mb{x}} \|_2^2 \\ \label{eq_laplace2}
	\lambda_2 \kappa \|\overline{\mb{x}} \|_2^2 &\leq g_l(\mb{x})^\top L \mb{x} \leq \lambda_n \mc{K} \|\overline{\mb{x}} \|_2^2
\end{align}
with $\lambda_n,\lambda_2$ respectively as the largest and smallest non-zero eigenvalues of the laplacian matrix $L$ (or $L_B$).
\end{lem}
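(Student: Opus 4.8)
The plan is to prove \eqref{eq_laplace} by a Rayleigh-quotient argument and then obtain \eqref{eq_laplace2} as a sector-scaled copy of it. For \eqref{eq_laplace} I would first record the structure of $L$: since $W$ is symmetric, $L=D-W$ is symmetric and positive semi-definite, with $L\mb{1}_n=\mb{0}$ and $\mb{1}_n^\top L=\mb{0}^\top$. Writing $\mb{x}=\overline{\mb{x}}+\tfrac{\mb{1}_n^\top\mb{x}}{n}\mb{1}_n$ and cancelling every term that hits $\mb{1}_n$ gives at once $L\mb{x}=L\overline{\mb{x}}$ and $\mb{x}^\top L\mb{x}=\overline{\mb{x}}^\top L\overline{\mb{x}}$. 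By construction $\overline{\mb{x}}\perp\mb{1}_n$, and under uniform connectivity $\mb{1}_n$ spans the simple zero-eigenvector subspace of $L$ — equivalently of $L_B$ — as already observed above. Expanding $\overline{\mb{x}}$ in an orthonormal eigenbasis $v_2,\dots,v_n$ of $L$ with $0<\lambda_2\le\dots\le\lambda_n$, we get $\overline{\mb{x}}^\top L\overline{\mb{x}}=\sum_{k\ge 2}\lambda_k\langle\overline{\mb{x}},v_k\rangle^2$ while $\|\overline{\mb{x}}\|_2^2=\sum_{k\ge 2}\langle\overline{\mb{x}},v_k\rangle^2$, so bounding each $\lambda_k$ between $\lambda_2$ and $\lambda_n$ yields \eqref{eq_laplace}.

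For \eqref{eq_laplace2} the plan is to pass to the edge form $L=\tfrac12\sum_{i,j}W_{ij}(e_i-e_j)(e_i-e_j)^\top$, which — using symmetry of $W$ — gives the bilinear identities $g_l(\mb{x})^\top L\mb{x}=\tfrac12\sum_{i,j}W_{ij}\big(g_l(x_i)-g_l(x_j)\big)(x_i-x_j)$ and $\mb{x}^\top L\mb{x}=\tfrac12\sum_{i,j}W_{ij}(x_i-x_j)^2$. Granting the per-link sandwich $\kappa(x_i-x_j)^2\le\big(g_l(x_i)-g_l(x_j)\big)(x_i-x_j)\le\mc{K}(x_i-x_j)^2$ and summing against the nonnegative weights $W_{ij}$ yields $\kappa\,\mb{x}^\top L\mb{x}\le g_l(\mb{x})^\top L\mb{x}\le\mc{K}\,\mb{x}^\top L\mb{x}$, and composing with \eqref{eq_laplace} delivers \eqref{eq_laplace2}.

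The step I expect to be the real obstacle is exactly that per-link bound: a purely pointwise sector condition $\kappa\le g_l(s)/s\le\mc{K}$ does not, for a generic nonlinearity, control the increment $(g_l(x_i)-g_l(x_j))(x_i-x_j)$, so the argument must exploit the specific structure of the admissible maps — oddness and sign preservation, together with the fact that $g_l$ is applied to the exchanged link signal $x_i-x_j$, for which $g_l(x_i-x_j)(x_i-x_j)=\tfrac{g_l(x_i-x_j)}{x_i-x_j}(x_i-x_j)^2$ lies directly in $[\kappa(x_i-x_j)^2,\,\mc{K}(x_i-x_j)^2]$ (as is the case, e.g., for logarithmic quantization of the transmitted data). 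Once this link-level reading is fixed, the remaining manipulations are routine linear algebra, and the same chain of arguments applies verbatim with $L$ replaced by $L_B$.
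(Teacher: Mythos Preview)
Your plan coincides with the paper's own proof: it cites a standard reference for \eqref{eq_laplace}, rewrites $g_l(\mb{x})^\top L\mb{x}$ in the edge-sum form $\tfrac12\sum_{i,j}W_{ij}(g_l(x_i)-g_l(x_j))(x_i-x_j)$, asserts the per-link sandwich $\kappa(x_i-x_j)^2\le(g_l(x_i)-g_l(x_j))(x_i-x_j)\le\mc{K}(x_i-x_j)^2$ ``recalling the sector-bound property'', and then composes with \eqref{eq_laplace}.

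The reservation you flag about that sandwich is well-placed, and the paper treats the step exactly as you do --- it simply states the increment bound as a consequence of the pointwise sector condition, without further argument. Your proposed resolution, however, does not fit the paper's model: in \eqref{eq_nonlin1} the nonlinearity acts nodewise, $g_l(\partial_{x_j}f_j)-g_l(\partial_{x_i}f_i)$, not on the transmitted difference, so you cannot invoke $g_l(x_i-x_j)/(x_i-x_j)\in[\kappa,\mc{K}]$ directly. If you want to make that step airtight you would need an extra structural hypothesis on $g_l$ (for instance, that it is nondecreasing with secant slopes in $[\kappa,\mc{K}]$, which log-quantization and saturation both satisfy) rather than a reinterpretation of where $g_l$ is applied.
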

\begin{proof}
The proof of \eqref{eq_laplace} is provided in  \cite{olfatisaberfaxmurray07}. To prove \eqref{eq_laplace2}, let ${\overline{g_l(\mb{x})} := g_l(\mb{x}) - \frac{\mb{1}_n^\top g_l(\mb{x})}{n} \mb{1}_n}$,
\begin{align} \nonumber
	g_l(\mb{x})^\top L \mb{x} &= \overline{g_l(\mb{x})}^\top L \overline{\mb{x}} \\ \label{eq_proof_Ls}
	&=  \frac{1}{2}\sum_{i,j=1}^n W_{ij}(g_l(x_i)-g_l(x_j))(x_i-x_j)
\end{align}
Recalling the sector-bound property of $g_l(\cdot)$,
\begin{align}\nonumber
	\kappa (x_i-x_j)(x_i-x_j) \leq  (g_l(x_i)&-g_l(x_j))(x_i-x_j) \\
	&\leq \mc{K}(x_i-x_j)(x_i-x_j)
\end{align}
Substituting the above equation in \eqref{eq_proof_Ls} and using  \eqref{eq_laplace}, the proof of \eqref{eq_laplace2} follows.
\end{proof}


\begin{lem} \label{lem_strict}
\cite{Boyd-CVXBook} For the strictly-convex function $F:\mathbb{R}^n \mapsto \mathbb{R}$ with Lipschitz gradient satisfying $0 <  \frac{d^2 f_i(x_i)}{dx_i^2} < 2 u$, define ${\gamma(k) := \mb{x}(k+1)-\mb{x}(k)}$ with $\mb{x}(k)$ and $\mb{x}(k+1)$ as the allocated node states at time step $k$ and $k+1$, respectively. We have,
\begin{align}
	F(\mb{x}(k+1)) \leq F(\mb{x}(k)) + \nabla_x F(\mb{x}(k))^\top \gamma(k) +  u\gamma(k)^\top \gamma(k)
	\label{eq_taylor_2}
\end{align}
\end{lem}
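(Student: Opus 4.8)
The plan is to exploit the separable structure $F(\mb{x}) = \sum_{i=1}^{n} f_i(x_i)$ and reduce the vector inequality to $n$ one-dimensional Taylor expansions. First I would fix a step $k$ and, for each coordinate $i$, apply the scalar Taylor theorem with Lagrange remainder to $f_i$ between $x_i(k)$ and $x_i(k+1)$: by smoothness of $F$ (Assumption~1) each $f_i$ is $C^2$, so there is a point $\xi_i$ on the segment joining $x_i(k)$ and $x_i(k+1)$ with
\begin{align}
f_i(x_i(k+1)) = f_i(x_i(k)) + \frac{d f_i}{d x_i}(x_i(k))\,\gamma_i(k) + \frac{1}{2}\frac{d^2 f_i}{d x_i^2}(\xi_i)\,\gamma_i(k)^2 ,
\end{align}
where $\gamma_i(k)$ is the $i$-th entry of $\gamma(k) = \mb{x}(k+1)-\mb{x}(k)$.

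Next I would use the upper curvature bound $\frac{d^2 f_i}{d x_i^2}(\xi_i) < 2u$ to replace the quadratic remainder by $u\,\gamma_i(k)^2$ (the lower bound $0 < \frac{d^2 f_i}{d x_i^2}$ is not needed here; it only enters the strict-convexity arguments used elsewhere), giving $f_i(x_i(k+1)) \le f_i(x_i(k)) + \frac{d f_i}{d x_i}(x_i(k))\,\gamma_i(k) + u\,\gamma_i(k)^2$. Summing over $i=1,\dots,n$ and recognizing that $\sum_i \frac{d f_i}{d x_i}(x_i(k))\,\gamma_i(k) = \nabla_x F(\mb{x}(k))^\top \gamma(k)$ for a separable $F$, together with $\sum_i \gamma_i(k)^2 = \gamma(k)^\top \gamma(k)$, yields exactly \eqref{eq_taylor_2}. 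Equivalently, one may stay in $\mathbb{R}^n$: since $\nabla^2 F(\mb{x}) = \mathrm{diag}\!\big(\tfrac{d^2 f_i}{d x_i^2}(x_i)\big) \prec 2u I_n$, the integral form of Taylor's theorem gives $F(\mb{x}(k+1)) - F(\mb{x}(k)) - \nabla_x F(\mb{x}(k))^\top\gamma(k) = \int_0^1 (1-t)\,\gamma(k)^\top \nabla^2 F(\mb{x}(k)+t\gamma(k))\,\gamma(k)\,dt \le 2u\|\gamma(k)\|_2^2 \int_0^1 (1-t)\,dt = u\,\gamma(k)^\top\gamma(k)$.

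There is essentially no genuine obstacle: this is the classical quadratic-upper-bound (descent) lemma for functions with bounded Hessian, which is why it is cited to \cite{Boyd-CVXBook}. The only points requiring a line of care are (i) checking that the hypotheses for the Taylor remainder hold, namely continuity of $\frac{d^2 f_i}{d x_i^2}$ on the segment, which follows from the smoothness in Assumption~1, and (ii) observing that the strict bound $\frac{d^2 f_i}{d x_i^2} < 2u$ still produces the non-strict estimate $u\,\gamma(k)^\top\gamma(k)$ once it is applied at the intermediate point $\xi_i$.
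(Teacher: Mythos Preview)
Your argument is correct and is exactly the standard descent-lemma proof for functions with bounded Hessian. The paper does not supply its own proof of this lemma; it simply cites \cite{Boyd-CVXBook}, so there is nothing to compare beyond noting that your coordinatewise Taylor expansion (equivalently, the integral-remainder form with $\nabla^2 F \prec 2u I_n$) is precisely the textbook derivation being referenced.
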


\section{The Proposed Iterative Algorithm} \label{sec_alg}
\subsection{Momentum-based Solution}
The state $x_i$ at every node/agent $i$ denotes the amount of resources assigned to that agent. In the distributed setup, agents cooperatively share necessary information on the gradients (a gradient-tracking solution) with their neighbour agents to iteratively optimize the cost function. The multi-agent network $\mc{G}$ is assumed undirected\footnote{As it is discussed later, in the absence of time-delays over the network, the decentralized solution works over general WB directed networks.} with symmetric weight matrix $W$.	
The proposed state update at every node $i$ is as follows:	

\small \begin{align}
x_i(k+1) &= x_i(k) +\eta  \sum_{j \in \mc{N}_i}   W_{ij}  (\partial_{x_j} f_j(k) - \partial_{x_i} f_i(k)) + \mu y_i(k),\\
y_i(k+1) &= x_i(k+1) - x_i(k),
\end{align}  \normalsize
where $\eta>0$ is the step-rate, $k$ is the time step, $\mc{N}_i$ denotes the set of neighbours of agent $i$, $W_{ij}$ is the consensus weighting factor on the shared information over link $(j,i)$,  $0\leq \mu<1$ is the momentum-rate, and variable $y_i(k)$ denotes the momentum term of node $i$ at time $k$. \textit{The momentum term $y_i$ is added to improve the convergence rate and accelerate the solution.} For initialization, the agents set their initial state values as $x_i(0)=b_i$ and $y_i(0)=0$.

\subsection{Solution Subject to Sector-Bound Nonlinearity}
The data transmission channels (or the communication links) between agents might be subject to nonlinear constraints. This implies that the sent data $\partial_{x_j} f_j(k)$ over a link $(j,i)$ is delivered to agent $i$ as $g_l(\partial_{x_j} f_j(k))$, where $g_l: \mathbb{R} \mapsto \mathbb{R}$ denotes the nonlinear mapping. Then, the decentralized solution is updated as,

\small \begin{align} \nonumber
x_i(k+1) =& x_i(k) + \mu y_i(k) \\&+ \eta  \sum_{j \in \mc{N}_i}   W_{ij}  (g_l(\partial_{x_j} f_j(k)) -g_l(\partial_{x_i} f_i(k))), \label{eq_nonlin1}\\
y_i(k+1) =& x_i(k+1) - x_i(k), \label{eq_nonlin2}
\end{align} \normalsize	
where the weight matrix $W$ is only weight-balanced and not necessarily stochastic.
\begin{ass}
The nonlinear function $g_l(\cdot)$ is assumed to be odd, strongly sign-preserving, and sector-bound.	
\end{ass}
One example of such link nonlinearity is saturation or clipping. Another example is log-scale quantization \cite{TASE_quant,LCSS_quant} illustrated in Fig.~\ref{fig_quant} with its formula  as follows:
\begin{align}
g_{l}(u) &:= \mbox{sgn}(u) \exp(g_{u}(\log(|u|))), \label{eq_quan_log}
\end{align}
with $g_{u}(u) := \rho \left[ \frac{u}{\rho}\right]$ (the operator $\left[\cdot\right]$ as rounding to the nearest integer) and $\rho>0$ as the quantization level.
\begin{figure}[]
\centering
\includegraphics[width=2.5in]{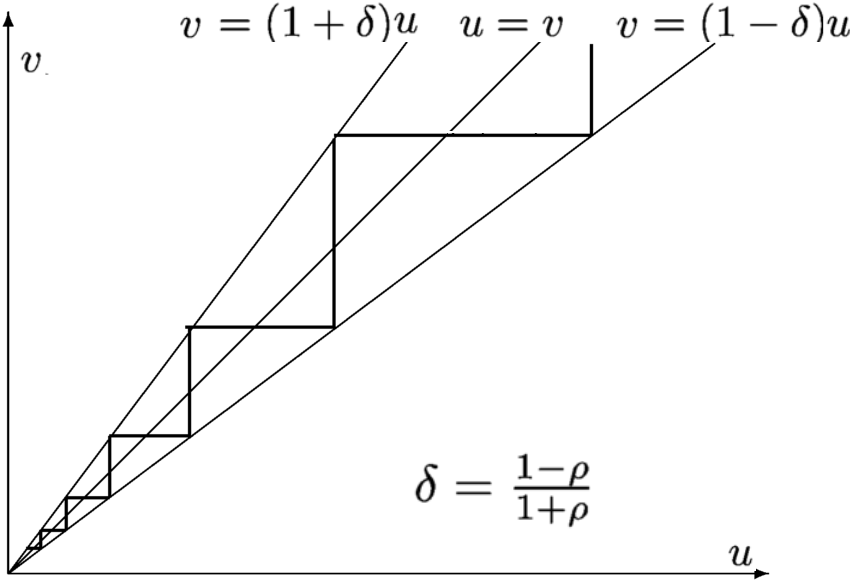}
\caption{The log-scale or logarithmic quantizer (with the quantization level $\rho$) is shown in this figure as a strongly sign-preserving sector-bound nonlinearity. The sector-bounds of this nonlinear function are $1 \pm \delta$ as shown in the figure.  }
\label{fig_quant}
\end{figure}

\subsection{Delay-Tolerant Solution}
Latency in a network refers to the delay between the transmission of data from a source agent to its destined agent; in other words, the exchanged data packets over the communication network are typically subject to time-delays. This might be due to limited bandwidth which leads to increased latency and drops, especially under high load. Further, mismanaged traffic and congestion, synchronization issues, and physical obstructions and interference from other devices can contribute to latency over the network. This may lead to instability in the multi-agent network and, in distributed optimization setup, is described by the number of time steps it takes to transfer data over a link/channel. Let $\overline{\tau}$ denote the maximum possible steps of time-delay over the network and  $\tau_{ij}(k) \leq \overline{\tau}$ denote the delay over the link $(j,i)$ at time $k$. We assume the most general case for the delays as arbitrary, random, (possibly) time-varying, and heterogeneous at different links.
Then, the delay-tolerant solution is updated as
\begin{align} \nonumber
x_i(k+1) =& x_i(k) +\eta_\tau  \sum_{j \in \mc{N}_i}  \sum_{r=0}^{\overline{\tau}} W_{ij}  \Big(g_l(\partial_{x_j} f_j(k-r)) \\ &-g_l(\partial_{x_i} f_i(k-r))\Big) \mathcal{I}_{k-r,ij}(r) + \mu y_i(k), \label{eq_sol_delay1}\\
y_i(k+1) =& x_i(k+1) - x_i(k), \label{eq_sol_delay2}
\end{align}
with $\eta_\tau$ as the step-rate, $\overline{\tau} \in \mathbb{Z}^+$ as the maximum possible time-delay (upper bound) and the indicator function $\mathcal{I}$ defined as,
\begin{align} \label{eq_I}
\mathcal{I}_{k,ij}(r) = \left\{ \begin{array}{ll}
	1, & \text{if}~  \tau_{ij}(k) = r,\\
	0, & \text{otherwise}.
\end{array}\right.
\end{align}
It should be noted that in Eq. \eqref{eq_sol_delay1}, although the local information is immediately available, the delayed information $g_l(\partial_{x_j} f_j(k-r))-g_l(\partial_{x_i}f_i(k-r))$ is used instead of $g_l(\partial_{x_j} f_j(k-r))-g_l(\partial_{x_i}f_i(k))$. In other words, agents use their past state information $g_l(\partial_{x_i}f_i(k-r))$ based on the time-delay $0\leq r \leq \overline{\tau}$ to ensure that the sum of states remains constant. This guarantees all time resource-demand feasibility as proved later in Lemma~\ref{lem_feas}, which ensures that at any termination time of the algorithm the balance between assigned resources $\sum_{i=1}^{n} x_i(k)$ and the demand $\sum_{i=1}^{n} b_i$ holds irrespective of the time-delays. For example, for application in distributed energy resource management and generator coordination, this implies that the allocated power states to the generators are always equal to the power demand and there would be no service disruption due to delays. 

The most general form of the proposed iterative solution is summarized in Algorithm~\ref{alg_1}. Note that in Eq. \eqref{eq_sol_delay1} the amount of time-delay of a received message is known to the destined agent. This is done by time-stamping the data shared over the network. 
\begin{algorithm}
\textbf{Input:}  $\mc{N}_i$, $W$, $\mc{G}$, $\eta_\tau$, $\mu$,  $b_i$, $f_i(\cdot)$, $\overline{\tau}$\;
\textbf{Initialization:} set $k=0$, $x_i(0)=b_i$,  $y_i(0)=0$\;
\While{termination criteria NOT true}{
	Agent $i$ receives (possibly delayed) information $\partial_{x_j} f_j$ from agent $j \in \mc{N}_i$ over the network $\mc{G}$\;
	Agent $i$ updates its resource state via Eq.~\eqref{eq_sol_delay1}-\eqref{eq_sol_delay2}\;
	Agent $i$ shares information $\partial_{x_i} f_i$ with neighbors $i \in \mc{N}_j$ over the network $\mc{G}$\;
	$k \leftarrow k+1$\;
}
\textbf{Return} Assigned resource state $x_i$ and optimal cost $f_i(x_i)$\;	
\caption{Distributed Multi-Agent Resource Scheduling}
\label{alg_1}
\end{algorithm}
\section{Analysis of Convergence} \label{sec_conv}
This section analyzes the convergence of the proposed algorithm towards optimal resource scheduling. We prove all-time constraint feasibility, stability of the optimizer $\mb{x}^*$ (as the equilibrium of the proposed dynamics), and convergence condition towards this optimal point.

\begin{lem} [Constraint Feasibility] \label{lem_feas}
Given a uniformly connected undirected network of agents, the solution under Algorithm~\ref{alg_1} preserves its all-time constraint feasibility.
\end{lem}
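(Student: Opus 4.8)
The plan is to show that the total allocated resource $\sum_{i=1}^{n} x_i(k)$ is invariant along the iteration and equals $\sum_{i=1}^{n} b_i = b$ for every $k$; since the feasibility constraint in \eqref{eq_dra} is exactly $\sum_{i=1}^{n}(x_i - b_i) = 0$, this establishes all-time feasibility. I would argue by induction on $k$. For the base case, the initialization $x_i(0) = b_i$, $y_i(0) = 0$ gives $\sum_{i=1}^{n} x_i(0) = b$ and $\sum_{i=1}^{n} y_i(0) = 0$. For the inductive step, I sum the general update \eqref{eq_sol_delay1} over $i$; this produces three contributions: the previous aggregate $\sum_i x_i(k)$, the scaled coupling double-sum, and the momentum term $\mu \sum_i y_i(k)$.

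The core of the argument is that the coupling double-sum vanishes. Writing the term indexed by the ordered pair $(i,j)$ and delay value $r$ as $W_{ij}\big(g_l(\partial_{x_j} f_j(k-r)) - g_l(\partial_{x_i} f_i(k-r))\big)\mathcal{I}_{k-r,ij}(r)$, the undirectedness of $\mc{G}$ (so $W_{ij} = W_{ji}$) together with the symmetry of the per-link delays ($\tau_{ij}(k) = \tau_{ji}(k)$, hence $\mathcal{I}_{k-r,ij}(r) = \mathcal{I}_{k-r,ji}(r)$) makes this term the exact negative of the one indexed by $(j,i)$ with the same $r$. Summing over all ordered pairs and all $r$ therefore cancels in pairs to zero. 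It is essential here that \eqref{eq_sol_delay1} subtracts the agent's own \emph{delayed} gradient $g_l(\partial_{x_i} f_i(k-r))$ matched to the same index $r$ rather than the current $g_l(\partial_{x_i} f_i(k))$; this matched-index construction is precisely what preserves the pairwise anti-symmetry under heterogeneous, time-varying delays.

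For the momentum term, \eqref{eq_sol_delay2} gives $y_i(k) = x_i(k) - x_i(k-1)$ for $k \geq 1$, so $\sum_{i=1}^{n} y_i(k) = \sum_{i=1}^{n} x_i(k) - \sum_{i=1}^{n} x_i(k-1) = b - b = 0$ by the induction hypothesis applied at steps $k$ and $k-1$; for $k = 0$ this sum is zero by initialization. Combining the three pieces yields $\sum_{i=1}^{n} x_i(k+1) = \sum_{i=1}^{n} x_i(k) = b$, which closes the induction. The non-delayed variants \eqref{eq_nonlin1}--\eqref{eq_nonlin2} and the basic scheme follow as the special case $\overline{\tau} = 0$.

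I expect the only real obstacle to be the delay bookkeeping: one must verify carefully that the pairwise cancellation of the coupling term survives arbitrary, possibly time-varying and link-heterogeneous delays, which is what forces the symmetric-delay requirement on the undirected links and the matched-delay-index form of the self-term in \eqref{eq_sol_delay1}. By contrast, graph connectivity (uniform or static) plays no role in feasibility — only weight-balancedness/symmetry of $W$ matters — and connectivity will instead be needed later for the optimality and consensus analysis.
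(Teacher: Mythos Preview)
Your proposal is correct and follows essentially the same route as the paper: sum \eqref{eq_sol_delay1} over all agents, use the symmetry $W_{ij}=W_{ji}$ (together with the matched-delay form of the self-term) to cancel the coupling sum pairwise, and then use $y_i(k)=x_i(k)-x_i(k-1)$ with the initialization to kill the momentum aggregate. Your explicit induction bookkeeping and your remark that connectivity is irrelevant here (only symmetry of $W$ and of the link delays matters) are slightly more careful than the paper's version, but the argument is the same.
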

\begin{proof} We need to show that $\sum_{i=1}^{n} (x_i(k) -b_i)=0$ for all $k$. From the proposed dynamics \eqref{eq_sol_delay1}-\eqref{eq_sol_delay2} we have,

\small	\begin{align} \nonumber
	\sum_{i=1}^{n} x_i(k+1) =& \sum_{i=1}^{n} x_i(k) +\eta_\tau  \sum_{i=1}^{n} \sum_{j \in \mc{N}_i}  \sum_{r=0}^{\overline{\tau}} W_{ij}  \Big(g_l(\partial_{x_j} f_j(k-r)) \\ \nonumber &-g_l(\partial_{x_i} f_i(k-r))\Big) \mathcal{I}_{k-r,ij}(r) + \mu \sum_{i=1}^{n} y_i(k)
\end{align} \normalsize
Recall that the weight matrix is symmetric and the nonlinear mapping is odd and sign-preserving. Therefore,
\begin{align} \nonumber
	W_{ij} \Big(g_l(\partial_{x_i} &f_i(k-r)) -  g_l(\partial_{x_j} f_j(k-r))\Big) = \\ \nonumber &-W_{ji} \Big(g_l(\partial_{x_j} f_j(k-r) -  g_l(\partial_{x_i} f_i(k-r)\Big).
\end{align}
As a result, the summation on the gradient tracking is zero and $\sum_{i=1}^{n} x_i(k+1) = \sum_{i=1}^{n} x_i(k) + \mu \sum_{i=1}^{n} y_i(k)$. Initializing from $x_i(0)=b_i$ and $y_i(0)=0$ and substituting $y_i(k) = x_i(k)-x_i(k-1)$, we have $\sum_{i=1}^{n} x_i(k+1) = \sum_{i=1}^{n} x_i(k)$. Further, we have $\sum_{i=1}^{n} y_i(k) = 0$. This completes the proof.
\end{proof}

\begin{lem} [Equilibrium] \label{lem_tree}
Let $\mb{x}^*$ be the optimal point of the Algorithm~\ref{alg_1} and the equilibrium of the proposed dynamics. Then, following Lemma~\ref{lem_z*}, $\nabla F(\mb{x}^*) \in \mbox{span}\{\mb{1}_n\}$.
\end{lem}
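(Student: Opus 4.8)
The plan is to start from the defining equilibrium relations of \eqref{eq_sol_delay1}--\eqref{eq_sol_delay2} and reduce them to a Laplacian nullspace condition. At an equilibrium $\mathbf{x}^*$ we have $x_i(k+1)=x_i(k)$ for every $i$ and $k$, hence $y_i(k+1)=x_i(k+1)-x_i(k)=0$; thus $\mathbf{y}=\mathbf{0}$ at the equilibrium and the momentum term drops out of \eqref{eq_sol_delay1}. Since the state, and therefore every local gradient $\partial_{x_i}f_i$, is then time-invariant, the delayed gradient differences $g_l(\partial_{x_j}f_j(k-r))-g_l(\partial_{x_i}f_i(k-r))$ no longer depend on $r$, while $\sum_{r=0}^{\overline{\tau}}\mathcal{I}_{k-r,ij}(r)=1$ because exactly one delay value is active on link $(j,i)$. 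Consequently \eqref{eq_sol_delay1} collapses to $0=\eta_\tau\sum_{j\in\mathcal{N}_i}W_{ij}\big(g_l(\partial_{x_j}f_j^*)-g_l(\partial_{x_i}f_i^*)\big)$ for every $i$, which in stacked form reads $L\,g_l(\nabla F(\mathbf{x}^*))=\mathbf{0}$ (and $L_B\,g_l(\nabla F(\mathbf{x}^*))=\mathbf{0}$ in the switching case, using the union graph $\mathcal{G}_B$).

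Next I would invoke connectivity. Because the network is undirected and weight-balanced and is uniformly connected over the window $B$, the (union) Laplacian has a single zero eigenvalue with eigenvector $\mathbf{1}_n$, so $\ker L=\ker L_B=\mathrm{span}\{\mathbf{1}_n\}$. Hence $g_l(\nabla F(\mathbf{x}^*))=c\,\mathbf{1}_n$ for some scalar $c$, i.e. $g_l(\partial_{x_i}f_i^*)$ takes the same value at every node.

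Then I would peel off the nonlinearity using the sign-preserving sector-bound property of $g_l$. For any pair $i,j$, the incremental sector bound already invoked in the proof of Lemma~\ref{lem_xLy} gives $0=\big(g_l(\partial_{x_i}f_i^*)-g_l(\partial_{x_j}f_j^*)\big)\big(\partial_{x_i}f_i^*-\partial_{x_j}f_j^*\big)\ge\kappa\big(\partial_{x_i}f_i^*-\partial_{x_j}f_j^*\big)^2\ge 0$, which forces $\partial_{x_i}f_i^*=\partial_{x_j}f_j^*$ for all $i,j$; equivalently $\nabla F(\mathbf{x}^*)\in\mathrm{span}\{\mathbf{1}_n\}$, which is the claimed condition and matches Lemma~\ref{lem_z*}. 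Combining it with the all-time feasibility of Lemma~\ref{lem_feas}, namely $\sum_{i=1}^n(x_i^*-b_i)=0$, the pair of relations is precisely the KKT system of \eqref{eq_dra}; by strict convexity of $F$ this system has a unique solution, so the equilibrium indeed coincides with the optimizer $\mathbf{x}^*$.

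The main obstacle I anticipate is the bookkeeping in the first step: carefully arguing that at an equilibrium all the heterogeneous, arbitrary, and time-varying delays become irrelevant, so that the per-link terms cancel pairwise (by symmetry of $W$ and oddness of $g_l$, as in Lemma~\ref{lem_feas}) and the residual is a genuine Laplacian action on $g_l(\nabla F(\mathbf{x}^*))$ — and making this argument uniform for the $B$-connected switching topology rather than a single fixed graph. The connectivity and injectivity steps are then essentially immediate from Lemma~\ref{lem_xLy} and standard spectral graph facts.
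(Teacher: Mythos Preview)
Your proposal is correct in spirit but takes a genuinely different route from the paper. The paper argues by contradiction: assuming $\nabla F(\mb{x}^*)\notin\mbox{span}\{\mb 1_n\}$, it picks nodes $\alpha,\beta$ realizing the maximum and minimum of $\varphi_i^*=\partial_{x_i}f_i^*$, uses uniform connectivity to get a path in $\mc G_B$ from $\alpha$ to $\beta$, and deduces that along this path some node $\overline\alpha$ has a neighbour with strictly smaller gradient (and dually for $\overline\beta$); the gradient-tracking sum at that node is then nonzero within the $B$-window, contradicting equilibrium. Your argument is instead direct and algebraic: freeze the state, collapse the delay sum, read the equilibrium as $L\,g_l(\nabla F(\mb x^*))=\mb 0$, invoke $\ker L=\mbox{span}\{\mb 1_n\}$, and then peel off $g_l$ via the incremental sector bound from Lemma~\ref{lem_xLy}. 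Your approach is cleaner on a fixed connected graph and makes explicit why the nonlinearity does not obstruct the conclusion (injectivity from $\kappa>0$), whereas the paper's max--min/path argument handles the $B$-connected switching case more naturally without ever needing a single Laplacian to act.

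One small gap to tighten: the claim $\sum_{r=0}^{\overline\tau}\mathcal I_{k-r,ij}(r)=1$ is not guaranteed for arbitrary heterogeneous time-varying delays---several messages may land at the same step or none may land at a given $k$ (the paper itself only uses $0<\sum_r\mathcal I_{k-r,ij}(r)\le\overline\tau+1$). This does not break your argument, but it means the reduction at a single $k$ is to a \emph{reweighted} Laplacian $\widetilde L(k)$ with nonnegative link weights $W_{ij}\sum_r\mathcal I_{k-r,ij}(r)$; to force $g_l(\nabla F(\mb x^*))\in\mbox{span}\{\mb 1_n\}$ you should then intersect the kernels over all $k$ in a $B$-window (equivalently, sum the equilibrium equations over that window so every link of $\mc G_B$ acquires a strictly positive weight). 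You already flagged this bookkeeping as the main obstacle; once it is written out, your proof stands.
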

\begin{proof} We prove this by reductio ad absurdum. Let
$\nabla F(\mb{x}^*) = ({\varphi}^*_1;\dots;{\varphi}^*_n)$ where $ \varphi^* \notin \mbox{span}\{\mb{1}_n\}$, i.e.,
\begin{align}
	\varphi^*_i \neq \varphi^*_j \Longleftrightarrow  \partial f_i(x_i^*)\neq \partial f_j(x_j^*),
\end{align}
for at least two nodes $i,j$. Let,
\begin{align}
	\alpha = \argmax_{q\in \{1,\dots,n\}}  {\varphi}^*_{q},~
	\beta = \argmin_{q \in \{1,\dots,n\}}  {\varphi}^*_q. \label{eq_beta0}
\end{align}
as two nodes over the network with maximum and minimum gradients. 	
Recall that $\mc{G}_B$ is uniformly connected; therefore,  there exists a path in $\mc{G}_B$ from node $\alpha$ to node $\beta$ over time-interval $B$. In this path, there is at least two nodes $\overline{\alpha}$ and $\overline{\beta}$  with the neighbors $\mc{N}_{\overline{\alpha}}$ and $\mc{N}_{\overline{\beta}}$, respectively, satisfying
$
\varphi^*_{\overline{\alpha}} > \varphi^*_{\mc{N}_{\overline{\alpha}}},~ \varphi^*_{\overline{\beta}} < \varphi^*_{\mc{N}_{\overline{\beta}}}
$.  This implies that the gradient-tracking summation in dynamics \eqref{eq_sol_delay1}  is non-zero, and  ${x}^*_{\overline{\alpha}}(k+1) \neq {x}^*_{\overline{\alpha}}(k)$ and ${x}^*_{\overline{\beta}}(k+1) \neq {x}^*_{\overline{\beta}}(k)$ over time-window of length $B$. This contradicts the definition of the equilibrium and completes the proof.
\end{proof}

\begin{theorem} [Convergence] \label{thm_delay}
The states of the agents under Algorithm~\ref{alg_1} collaboratively converge to the optimal solution of problem \eqref{eq_dra} for $0\leq\mu<1$ and
\begin{align} \label{eq_eta2}
	\eta_\tau <  \frac{\kappa \lambda_2}{u \lambda_n^2 \mc{K}^2(\overline{\tau}+1)}
\end{align}
\end{theorem}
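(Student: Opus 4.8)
The plan is to take the global objective $F$ as a Lyapunov function and to show that, along the trajectory generated by Algorithm~\ref{alg_1}, $F(\mb{x}(k))$ is non-increasing up to summable perturbations, with the per-step decrease proportional to $\|\overline{\boldsymbol{\varphi}}(k)\|_2^2$, where $\boldsymbol{\varphi}(k):=\nabla F(\mb{x}(k))$ and $\overline{\boldsymbol{\varphi}}(k):=\boldsymbol{\varphi}(k)-\frac{\mb{1}_n^\top\boldsymbol{\varphi}(k)}{n}\mb{1}_n$. First I would rewrite the update \eqref{eq_sol_delay1}-\eqref{eq_sol_delay2} in compact form: letting $L^{(r)}(k)$ be the weight-balanced Laplacian of the graph with weights $W_{ij}\mathcal{I}_{k-r,ij}(r)$, one has $\gamma(k):=\mb{x}(k+1)-\mb{x}(k)=-\eta_\tau\sum_{r=0}^{\overline{\tau}}L^{(r)}(k)\,g_l(\boldsymbol{\varphi}(k-r))+\mu\,y(k)$ with $y(k)=\gamma(k-1)$. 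The matrix $\sum_{r=0}^{\overline{\tau}}L^{(r)}(k)$ is precisely the Laplacian of the graph active at step $k$, and by the uniform-connectivity part of Assumption~2 the union of these graphs over any window of length $B$ is connected, so Lemma~\ref{lem_xLy} applies to the aggregated Laplacian $L_B$ with nonzero eigenvalues $\lambda_2,\lambda_n$.

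Second, I would invoke the descent inequality of Lemma~\ref{lem_strict}, $F(\mb{x}(k+1))\le F(\mb{x}(k))+\boldsymbol{\varphi}(k)^\top\gamma(k)+u\,\gamma(k)^\top\gamma(k)$, and split $\boldsymbol{\varphi}(k)^\top\gamma(k)$ into the ``principal'' zero-delay term $-\eta_\tau\,\boldsymbol{\varphi}(k)^\top L^{(0)}(k)\,g_l(\boldsymbol{\varphi}(k))$ and the rest. Using the sector-bound estimate \eqref{eq_laplace2} (applied over a $B$-window so that the $\lambda_2$-bound is available), the principal term contributes at most $-\eta_\tau\kappa\lambda_2\|\overline{\boldsymbol{\varphi}}(k)\|_2^2$ (aggregated over the window). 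The remaining contributions --- the momentum cross term $\mu\,\boldsymbol{\varphi}(k)^\top y(k)$, the delayed cross terms $-\eta_\tau\sum_{r\ge 1}\boldsymbol{\varphi}(k)^\top L^{(r)}(k)\,g_l(\boldsymbol{\varphi}(k-r))$, and the squared-step term $u\,\gamma(k)^\top\gamma(k)$ --- I would treat as perturbations, bounding each via Young's inequality together with the operator-norm bound $\|L^{(r)}(k)\|\le\lambda_n$, the sector bound $\|\overline{g_l(\boldsymbol{\varphi})}\|_2\le\mc{K}\|\overline{\boldsymbol{\varphi}}\|_2$ (mean-subtraction is free because $L^{(r)}\mb{1}_n=\mb{1}_n^\top L^{(r)}=0$), and the discrete Cauchy--Schwarz bound $\|\sum_{r=0}^{\overline{\tau}}v_r\|_2^2\le(\overline{\tau}+1)\sum_{r=0}^{\overline{\tau}}\|v_r\|_2^2$ for $\gamma(k)$; this last bound is where the delay factor $(\overline{\tau}+1)$ enters, while the $u\,\eta_\tau^2\lambda_n^2\mc{K}^2$ weight of the squared-step term is what gets compared against the principal coefficient $\eta_\tau\kappa\lambda_2$. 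Because $\|\gamma(k)\|_2$ is itself controlled by delayed copies of $\|\overline{\boldsymbol{\varphi}}\|_2$ and by $\|y(k)\|_2=\|\gamma(k-1)\|_2$, these estimates close into a stable discrete recursion in $\|\gamma(\cdot)\|_2^2$ whenever $0\le\mu<1$.

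Third, I would sum the descent inequality over $k$ (shifting the delayed indices so that every term is attributed to a single step). After collecting coefficients and choosing the free Young-parameters appropriately, the net coefficient of $\sum_k\|\overline{\boldsymbol{\varphi}}(k)\|_2^2$ is strictly negative exactly when $0\le\mu<1$ and $\eta_\tau<\frac{\kappa\lambda_2}{u\lambda_n^2\mc{K}^2(\overline{\tau}+1)}$, i.e.\ \eqref{eq_eta2}. Strict convexity of $F$ makes it coercive on the feasible hyperplane $\{\mb{x}:\sum_i x_i=\sum_i b_i\}$, on which the trajectory stays for all $k$ by Lemma~\ref{lem_feas}; hence $\mb{x}(k)$ is bounded and $F(\mb{x}(k))$ converges, which forces $\sum_k\|\overline{\boldsymbol{\varphi}}(k)\|_2^2<\infty$ and $\sum_k\|\gamma(k)\|_2^2<\infty$, so $\overline{\boldsymbol{\varphi}}(k)\to 0$, $\gamma(k)\to 0$, and $y(k)\to 0$. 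Taking any convergent subsequence $\mb{x}(k_\ell)\to\bar{\mb{x}}$, the limit is feasible and satisfies $\nabla F(\bar{\mb{x}})\in\mbox{span}(\mb{1}_n)$, so by Lemma~\ref{lem_z*} it equals the unique optimizer $\mb{x}^*$ of \eqref{eq_dra}; uniqueness of the accumulation point together with the equilibrium characterization of Lemma~\ref{lem_tree} then upgrades subsequential convergence to $\mb{x}(k)\to\mb{x}^*$.

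The step I expect to be the main obstacle is the perturbation bookkeeping in the second and third paragraphs: absorbing the momentum term \emph{and} the heterogeneous, time-varying, delayed cross-terms into the single negative quantity $-\eta_\tau\kappa\lambda_2\|\overline{\boldsymbol{\varphi}}\|_2^2$, while keeping the shifted delay indices consistent under summation, is what pins down the precise dependence on $(\overline{\tau}+1)$, $\mc{K}^2$, and $\lambda_n^2$ in \eqref{eq_eta2}. A secondary subtlety is that under merely uniform (rather than persistent) connectivity, the $\lambda_2$-bound of Lemma~\ref{lem_xLy} cannot be applied step by step, so the descent estimate must be aggregated over windows of length $B$ before the negative term can be extracted --- the standard device for hybrid/switching networks.
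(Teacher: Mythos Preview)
Your proposal is sound but takes a genuinely different route from the paper. The paper proceeds in three \emph{decoupled} steps: (I) set $\mu=0$, $\overline{\tau}=0$ and use the descent inequality of Lemma~\ref{lem_strict} together with Lemma~\ref{lem_xLy} on the single Laplacian term $-\eta L g_l(\nabla F)$ to obtain $(-\kappa\eta\lambda_2+u\lambda_n^2\mc{K}^2\eta^2)\|\phi\|^2\le 0$, hence $\eta<\overline{\eta}:=\kappa\lambda_2/(u\lambda_n^2\mc{K}^2)$; (II) observe that over $\overline{\tau}+1$ steps each link contributes at most $\overline{\tau}+1$ times, so the same inequality holds with $\eta$ replaced by $(\overline{\tau}+1)\eta_\tau$, yielding~\eqref{eq_eta2}; (III) with the gradient-tracking summand already shown to vanish, the momentum update reduces in the limit to $y_i(k+1)=\mu y_i(k)$, which is a contraction for $0\le\mu<1$. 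Thus the paper never mixes the momentum with the gradient term in a single inequality; it argues sequentially that the Laplacian part decays first and then drags $y$ to zero.

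Your approach instead handles all three effects \emph{simultaneously} in one Lyapunov/descent inequality, splitting $\gamma(k)$ into principal, delayed, and momentum pieces and balancing cross terms via Young's inequality before summing over $k$. This is the more standard modern technique and is arguably more rigorous in its treatment of the coupling between momentum and the delayed Laplacian terms; it also makes explicit the window-aggregation needed under uniform (rather than per-step) connectivity, which the paper leaves implicit. The trade-off is that your route, as written, will naturally produce a step-size bound that \emph{depends on} $\mu$ (the Young-parameters that absorb $\mu\,\boldsymbol{\varphi}^\top y$ and $u\mu^2\|y\|^2$ eat into the $-\eta_\tau\kappa\lambda_2$ budget), whereas the paper's decoupled argument yields the $\mu$-independent bound~\eqref{eq_eta2}. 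If you want to recover exactly~\eqref{eq_eta2} you would have to mimic the paper's Step~III and treat the momentum as an \emph{a posteriori} vanishing perturbation rather than fold it into the per-step descent; otherwise your argument is correct but delivers a slightly more conservative (and $\mu$-dependent) sufficient condition.
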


\begin{proof} We prove the theorem in three steps.

\textbf{Step I:} we set $\mu=0$ and derive the convergence condition for the delay-free case ($\overline{\tau}=0$), i.e., solution under Eqs. \eqref{eq_nonlin1}-\eqref{eq_nonlin2}.
Let define $\overline{F}(k) := F(\mb{x}(k))-F(\mb{x}^*)$ with $F(\mb{x}^*)$ as the optimal cost. We need to prove that this residual satisfies $\overline{F}(k+1) < \overline{F}(k)$ for $\mb{x}(k)  \neq \mb{x}^*$. Recall that ${\mb{y}(k+1) = \mb{x}(k+1)-\mb{x}(k)}$.
Following Lemma~\ref{lem_strict}, we have
\begin{align} \label{eq_proof_1}
	\nabla_x F^\top \mb{y}(k+1)  + u \mb{y}(k+1)^\top \mb{y}(k+1)  \leq 0.
\end{align}
Using the graph-theoretic notions of Section~\ref{sec_graph} and notations from the consensus literature \cite{olfatisaberfaxmurray07}, we rewrite Eq.~\eqref{eq_sol_delay1} (with some abuse of notation) in the Laplacian-gradient form as
\begin{align}
	\mb{x}(k+1) = \mb{x}(k) - \eta_\tau  L g_l(\nabla_x F) \label{eq_L}
\end{align}
with $g_l(\nabla_x F)$ acting as an element-wise nonlinear mapping on entries of $\nabla_x F$. For notation simplicity, we drop the dependence on $k$.
Then, from Eq.  \eqref{eq_proof_1} and \eqref{eq_L}, we have
\begin{align} \label{eq_proof_2}
	-\eta_\tau  \nabla_x F^\top     L g_l(\nabla_x F)  + u  \eta_\tau^2     g_l(\nabla_x F)^\top L^\top L g_l(\nabla_x F) \leq 0.
\end{align}
Recalling the equilibrium condition from Lemma~\ref{lem_tree}, define $\phi(\mb{x}) := \nabla_x F -  \frac{\mb{1}_n}{n}\sum_{i=1}^n \partial_{x_i} f_i $ as the dispersion (or disagreement) parameter. Using Lemma~\ref{lem_xLy}, it is straightforward to show that Eq.~\eqref{eq_proof_2} holds for
\begin{align}
	\label{eq_proof__rho}
	(-\kappa \eta \lambda_2 + u \lambda_n^2 \mc{K}^2 \eta^2)    \phi^\top    \phi \leq 0,
\end{align}
Because $\phi^\top    \phi > 0$ for $\mb{x} \neq \mb{x}^*$,  we need the following to hold
\begin{align} \label{eq_eta}
	\eta <  \frac{\kappa \lambda_2}{u \lambda_n^2 \mc{K}^2} =: \overline{\eta}.
\end{align}
This proves that for $\eta < \overline{\eta}$  the optimal convergence holds and the summation $\sum_{j \in \mc{N}_i}   W_{ij}  (g_l(\partial_{x_j} f_j) -g_l(\partial_{x_i} f_i))$ goes to zero.

\textbf{Step II:} Next, we extend the proof for convergence in the presence of time-delays ($\overline{\tau}>0$), i.e., the solution under Eqs. \eqref{eq_sol_delay1}-\eqref{eq_sol_delay2}.
For general arbitrary delays, the information sent from any node $i$ reaches its neighbour $j\in \mc{N}_i$ at most in $\overline{\tau}+1$ steps. Then, we need to consider  $\mb{x}(k+\overline{\tau}+1)-\mb{x}(k)$ for the convergence analysis in \textbf{Step I} instead of $\mb{x}(k+1)-\mb{x}(k)$. Recall from~\eqref{eq_I} that
$0 <  \sum_{r=0}^{\overline{\tau}} \mc{I}_{k-r,ij}(r) \leq (\overline{\tau}+1)$. This implies that the Eqs.~\eqref{eq_proof_1} and~\eqref{eq_proof_2} need  to be scaled by $\overline{\tau}+1$, and therefore, the outcome of $\eta_\tau$ is down-scaled by $\overline{\tau}+1$, i.e., $  (\overline{\tau}+1) \eta_\tau< \overline{\eta}$.
Therefore, to reach the convergence, $\eta_\tau$ needs to satisfy
\begin{align} \nonumber
	\eta_\tau <   \frac{\overline{\eta}}{\overline{\tau}+1}
\end{align}
For this range of $\eta_\tau$, we have $\overline{F}(k+1) < \overline{F}(k)$ and the residual converges to zero in the presence of time-delays. Further, under this range of $\eta_\tau$ we have the following,
\begin{align} \nonumber \sum_{j \in \mc{N}_i}  \sum_{r=0}^{\overline{\tau}} &W_{ij}  (g_l(\partial_{x_j} f_j(k-r)) \\ &-g_l(\partial_{x_i} f_i(k-r))) \mathcal{I}_{k-r,ij}(r) \rightarrow 0. \label{eq_sum_zero}
\end{align}

\textbf{Step III:} as the final step, we consider $0<\mu<1$. Note that Eq.~\eqref{eq_sol_delay1} can be written as
\begin{align} \nonumber
	y_i(k+1) = &\eta_\tau  \sum_{j \in \mc{N}_i}  \sum_{r=0}^{\overline{\tau}} W_{ij}  \Big(g_l(\partial_{x_j} f_j(k-r)) \\ &-g_l(\partial_{x_i} f_i(k-r))\Big) \mathcal{I}_{k-r,ij}(r) + \mu y_i(k),
\end{align}
Following Eq.~\eqref{eq_sum_zero}, the above converges in limit to a geometric series in the form $y_i(k+1) =  \mu y_i(k)$. For $0<\mu<1$ this series converges to zero. Further, from the proof of Lemma~\ref{lem_feas} we have
\begin{align} \nonumber
	\eta_\tau  \sum_{i=1}^{n} \sum_{j \in \mc{N}_i}  \sum_{r=0}^{\overline{\tau}} &W_{ij}  \Big(g_l(\partial_{x_j} f_j(k-r)) \\ \nonumber &-g_l(\partial_{x_i} f_i(k-r))\Big) \mathcal{I}_{k-r,ij}(r) =0
\end{align}
Recalling the initialization $y_i(0)=0$, this further implies that
\begin{align} \nonumber
	\sum_{i=1}^{n} y_i(k+1) =  \sum_{i=1}^{n} \mu y_i(k)=0
\end{align}
for all $k$. This completes the proof.
\end{proof}

\subsection{Some Discussions and Remarks}
It is worth mentioning that one can prove convergence and constraint-feasibility of the proposed algorithm over \textit{WB directed networks}  in the \textit{delay-free} case (i.e., with $\overline{\tau}=0$). Recall that Lemma~\ref{lem_xLy} holds for strongly-connected WB directed networks by replacing $L$ with $L_s:=\frac{L+L^\top}{2}$. Similar stability results for consensus algorithms are proved in \cite{SensNets:Olfati04}. Then, following as in consensus literature \cite{olfatisaberfaxmurray07} and substituting $\lambda_{ns},\lambda_{2s}$ (as the largest and smallest non-zero eigenvalues of $L_s$ respectively) the proof of Lemma~\ref{lem_xLy} can be extended to WB networks. Consequently, the proofs in Theorem~\ref{thm_delay} similarly follow for WB directed networks for $\overline{\tau}=0$.

Moreover, it should be mentioned that the proposed nonlinear solution not only works for log-scale quantization and saturation but also works for any sector-bound sign-preserving nonlinearity. This implies that one can purposefully apply various proper nonlinear mappings, for example, to tune the convergence rate via sign-based functions. Further, similar to nonlinear robust consensus algorithms \cite{shi2020benefits,wang2022robust}, other types of nonlinearity might be applied to dynamics \eqref{eq_nonlin1} to come up with robust solutions to disturbance and noise.

In contrast to the primal-dual formulations (such as ADMM) which require all-time connected static networks \cite{carnevale2022gtadam,jian2019distributed,banjac2019decentralized,jiang2022distributed}, this work allows for uniform-connectivity, i.e., changes in the network topology and temporary disconnectivity of the communication network.   Further, convergence over networks subject to time-delays and nonlinear data transmissions are hard to address via primal-dual formulations. Few existing delay-tolerant papers in the literature \cite{wang2022distributed2,liu2022distributed,cdc_dtac} are not all-time feasible and lose resource-demand balance at some times.

\section{Simulations} \label{sec_sim}
\subsection{Academic Example: Comparison with the Literature}	
For MATLAB simulation, we consider a random network of ER graph (with linking probability of $p=25\%$) representing the multi-agent network of $n=20$ nodes. The agents locally optimize the following cost function:
\begin{align} \label{eq_f_quad}
f_i(x_i)=g_i x_i^2 + d_i x_i + a_i +\sigma [{x}_i - M_i]^+ + \sigma [m_i - {x}_i]^+
\end{align}
with random cost parameters $g_i \in (0,0.3]$, $d_i \in (0,10]$, $a_i \in (0,10]$, and box constraint parameters $m_i=10$, $M_i=110$, $\sigma = 1$. For the balancing constraint, we consider $b_i = 50$ for all $i$. The step rate and momentum-rate are set as $\eta = 0.04$ and $\mu = 0.9$.

First, we simulate in the absence of time-delays (i.e., Eqs. \eqref{eq_nonlin1}-\eqref{eq_nonlin2}) and subject to log-scale quantization with quantizer level $\rho =\frac{1}{2^{10}}$. Under our proposed momentum-based nonlinear solution, the residual is compared with linear \cite{boyd2006optimal}, accelerated linear \cite{shames2011accelerated} (with $\beta=0.5$), single-bit \cite{taes}, finite sign-based \cite{chen2016distributed} (with $\zeta = 0.6$), fixed sign-based \cite{liu2023distributed} (with $\zeta_1 = 0.6,\zeta_2 = 1.5$), and saturated dynamics \cite{scl} (with $\delta=0.5$) in Fig.~\ref{fig_compare}. These works provide constraint-feasible solutions while many other existing literature, including ADMM-based formulations, are not all-time feasible.
\begin{figure}[]
\centering
\includegraphics[width=2.75in]{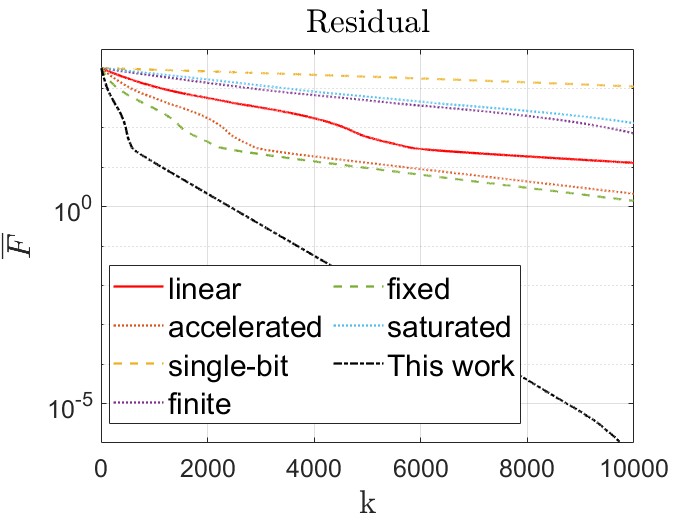}
\caption{ Comparison of the time-evolution of the cost residuals under different resource scheduling solutions in the literature.
}
\label{fig_compare}
\end{figure}

Next, we perform the simulation of the proposed nonlinear dynamics considering logarithmic quantization for different quantizer levels $\rho$. We set $\mu=0.5$, $\eta = 0.04$ and the rest of the simulation setup and parameters are the same as the previous ones. Fig.~\ref{fig_simquant} shows the residual convergence for different values $\rho$.
\begin{figure}[]
\centering
\includegraphics[width=2.75in]{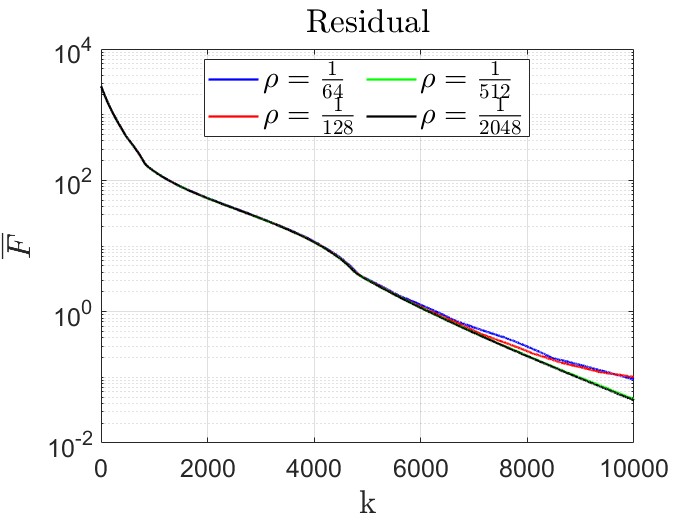}
\caption{ The time-evolution of the cost residuals under different log-scale quantization levels as sector-bound nonlinearity.
}
\label{fig_simquant}
\end{figure}

Next, we simulate our proposed algorithm over a uniformly-connected ER network setup by considering \textit{random} link failure rate $p_l = 80\%$. For linking probability $p=25\%$ of ER graph, the bond-percolation threshold for link failure from Eq.~\eqref{eq_bond} is $p_c = 58\%$. For link failure $p_l>p_c$ the ER graph becomes disconnected, but it preserves uniform-connectivity over time-window of $B$ steps for $p_l^B<p_c$. In other words, the \textit{union graph} is connected over $B\geq3$ steps. The simulation is performed for different momentum rates to show how it affects the convergence rate of the algorithm. The residuals along with momentum variables and resource states at all agents (and their average for constraint-feasibility check) are shown in Fig.~\ref{fig_momentum} versus time $k$. For this simulation, we set $\eta = 0.1$.
\begin{figure*}[]
\centering
\includegraphics[width=1.75in]{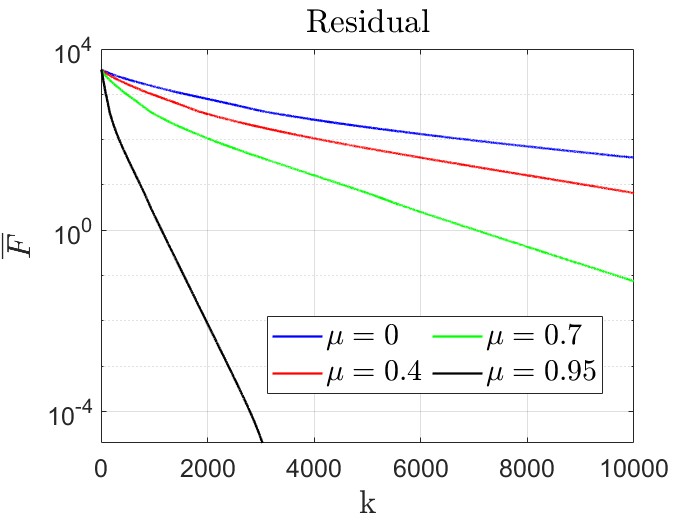}
\includegraphics[width=1.75in]{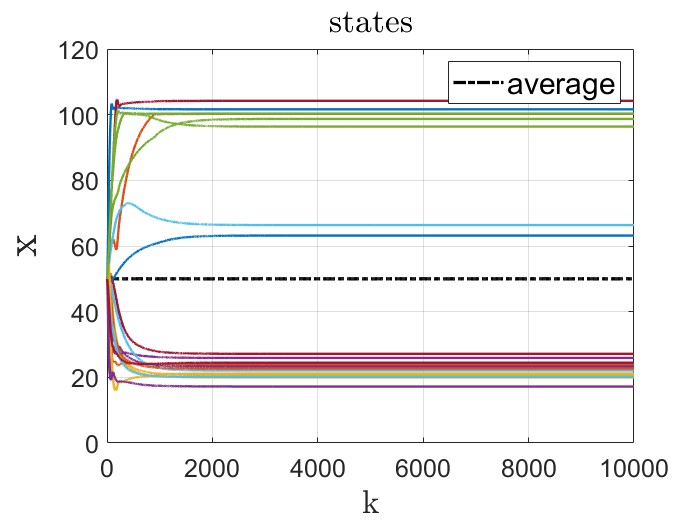}
\includegraphics[width=1.75in]{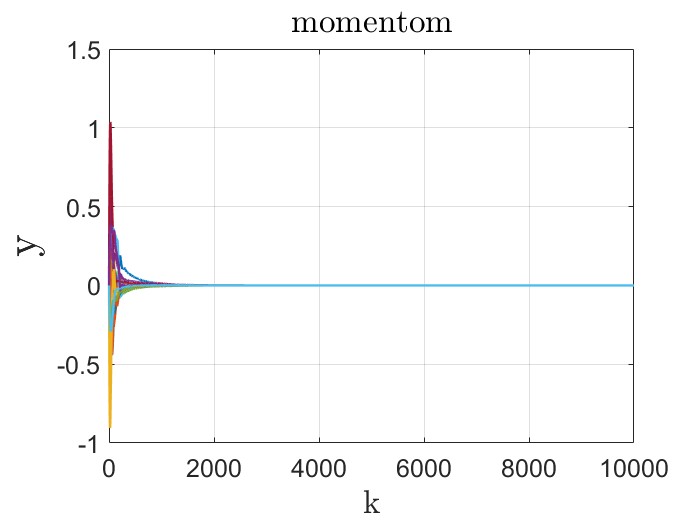}
\caption{(Left) The time-evolution of the scheduling cost residuals under different momentum rates. (Middle) The time-evolution of the resource states $x_i$ at all agents for $\mu=0.95$. The average of all states remains constant over time implying constraint-feasibility. (Right)  The time-evolution of the momentum states $y_i$ at all agents for $\mu=0.95$.
}
\label{fig_momentum}
\end{figure*}

For the last simulation, we illustrate the convergence subject to data-transmission time-delays in Fig.~\ref{fig_delay}. We used the following  parameters of the proposed algorithm: $\eta = 0.2$, $\mu = 0.8$, $p=20\%$, and different $\overline{\tau}$ values (see the figure). For the step rates and time-delay bounds not satisfying Eq.~\eqref{eq_eta2} there is oscillating steady-state residual cost and the solution is not convergent. It should be noted that the assigned resources preserve all-time constraint-feasibility (resource-demand balance) even subject to time-delays over the network. This advances the existing literature on distributed optimization in the presence of delays. As proved in Lemma~\ref{lem_feas} the solution satisfies constraint-feasibility at all times. Also, according to \textbf{Step III} of Theorem~\ref{thm_delay} all momentum states converge to zero.

\begin{figure*}[]
\centering
\includegraphics[width=1.75in]{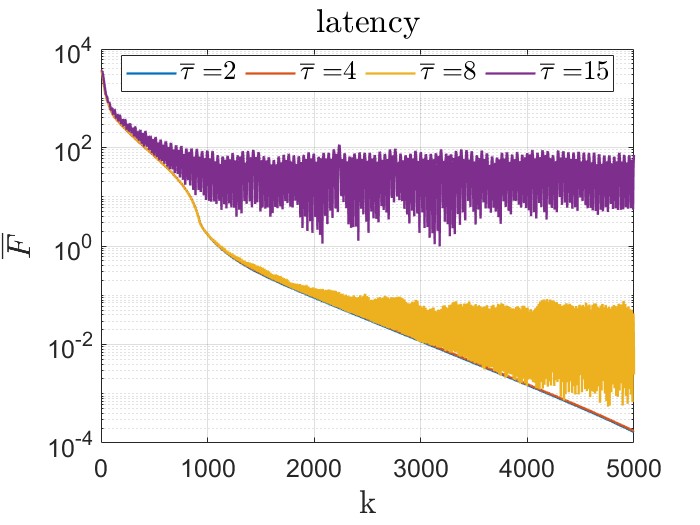}
\includegraphics[width=1.75in]{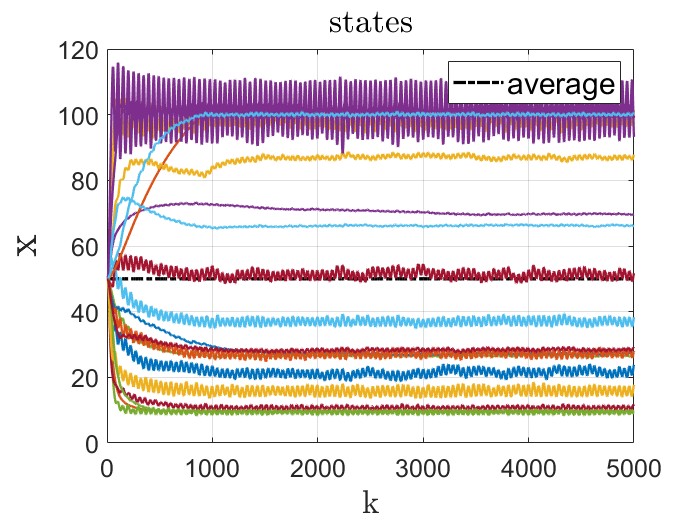}
\includegraphics[width=1.75in]{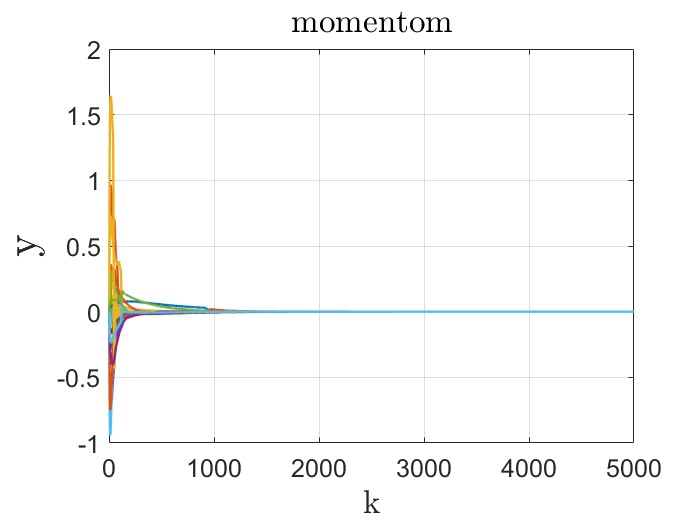}
\caption{(Left) The scheduling cost residuals versus time subject to different maximum time-delay values. (Middle) The time-evolution of the assigned resources (states) $x_i$ at all agents for $\overline{\tau}=4$. The average of all resource states remains constant even in the presence of time-delays, which implies all-time constraint feasibility. (Right)  The time-evolution of the momentum states $y_i$ at all agents subject to time-delays with $\overline{\tau}=4$. All momentums converge to zero as proved in \textbf{Step III} of Theorem~\ref{thm_delay}.
}
\label{fig_delay}
\end{figure*}

\subsection{Application to CPU Scheduling}	
In this section, we consider a real-world application in CPU scheduling over distributed data centres (servers). For this example, we consider a network of $n=100$ servers that need to be optimally scheduled with CPU workloads to optimize the following objective function \cite{rikos2021optimal,grammenos2023cpu}:
\begin{align}\label{eq:fiz2}
f_i(x_i) = \frac{1}{2\pi_i^{\max}} (x_i - \rho_i)^2
\end{align}
where $x_i$ denotes the assigned CPU workload and $\rho_i$ is the positive demand at node (server) $i$. As the coupling constraint, we have $\sum_{i=1}^n x_i  = \rho = \sum_{i=1}^n \rho_i$, implying that the sum of all workloads is equal to the overall CPU demand denoted by $\rho$. In the formulation~\eqref{eq:fiz2}, we have $\pi_i^{\max} \coloneqq c_i T_h$, with $c_i$ as the sum of all clock rate frequencies of all processing cores of server $i$ (in cycles-per-second) and $T_{h}$ is the CPU allocation time period. Here, we set $\pi_i^{\max}=100$, $\rho_i \in [15~35]$, and $\rho = 2500$. The local constraints are $ m_i = 0\leq x_i \leq  0.6 \pi_i^{\max} = M_i$ which limits the allocated CPU workload at each server $i$. To address these box constraints logarithmic penalty terms in the form~\eqref{eq_log_penalty} with $\alpha=2$ and $\sigma=4$ are considered. The network of servers is considered an ER random network with $p=12\%$ linking probability. Other parameters of the proposed solution are considered as $\eta = 0.1$, $\mu = 0.4$.  We compared our solution under logarithmically quantized channels (as the sector-bound nonlinearity) with uniformly quantized CPU scheduling in \cite{rikos2021optimal}. The quantization level is considered equal to $\frac{1}{2^4}$. The results are shown in Fig.~\ref{fig_cpu}. As shown in the figure, our proposed log-quantized solution converges to lower residuals over time as compared to the uniformly-quantized solution in \cite{rikos2021optimal}. In addition, the average of the CPU workloads is constant over time which implies that the balance between CPU workloads and demands holds at all times (this shows all-time constraint feasibility).

\begin{figure}[]
\centering
\includegraphics[width=2.35in]{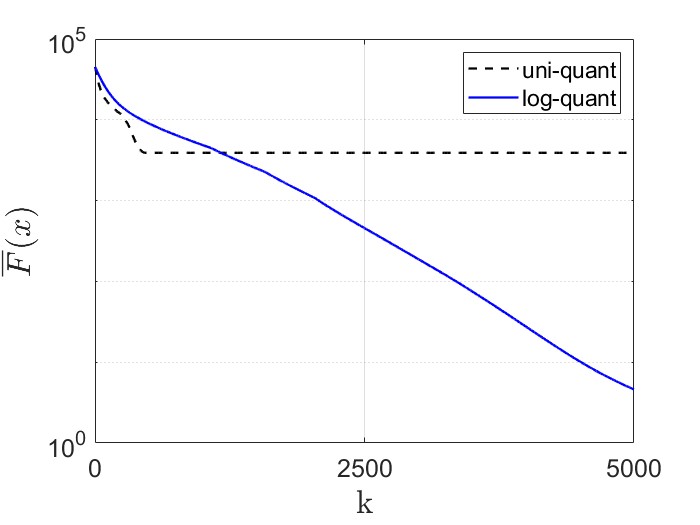}
\includegraphics[width=2.35in]{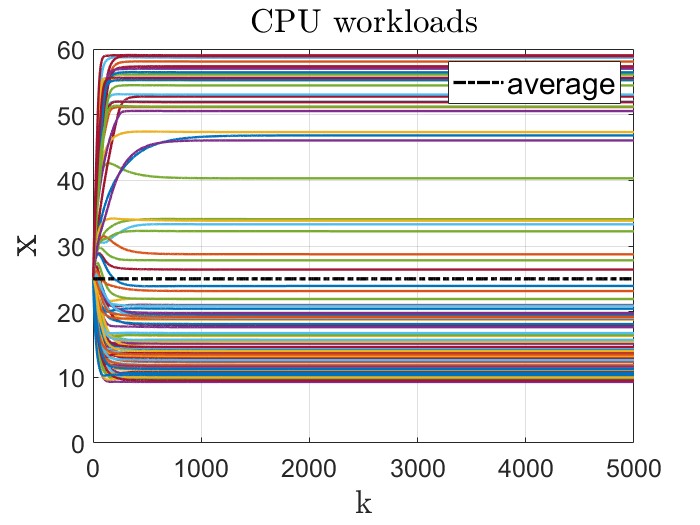}
\caption{ The left figure shows the residual of the CPU scheduling objective function under log-scale quantization (this work) and uniform quantization (the work \cite{rikos2021optimal}) over time. Clearly, uniform quantization may result in some steady-state residual in contrast to our work. The right figure shows the CPU workloads of $100$ servers over time. The average CPU workload is time-invariant, implying all-time CPU-demand feasibility.
}
\label{fig_cpu}
\end{figure}

\section{Conclusions} \label{sec_con}
\subsection{Concluding Remarks}
In this paper, we have presented a scalable and decentralized momentum-based dynamics to improve the convergence rate in distributed resource allocation scenarios. The paper addresses crucial challenges, including all-time resource-demand feasibility, communication time-delay, uniform network connectivity, and the potential nonlinearity induced by factors such as logarithmic quantization. These hold particular importance in the context of real-world systems characterized by fluctuating demands, diverse communication conditions, and limitations induced by nonlinear effects (e.g., quantization and clipping). In this direction, our proposed framework provides a versatile solution for resource scheduling in the context of distributed computing, not only advancing the theoretical understanding but also providing practical solutions to enhance the convergence rate.

\subsection{Future Directions}
The insights gained from this research pave the way for future innovations and optimizations, e.g., in machine learning and artificial intelligence; for example, one can address adaptive parameter tuning and fast convergence in the presence of nonlinearities induced by logarithmic quantization and saturation/clipping. Delay-tolerant machine learning scenarios are another interesting avenue of future research. One can further apply the proposed scenarios to the applications mentioned in Section~\ref{sec_app}, while addressing their particular problem-specific limitations; for example, the concept of ramp-rate-limits can be addressed in distributed energy resource management and economic dispatch via constraining the state rates by saturation function.

\section*{Acknowledgements}
This work has been supported by the Center for International Scientific Studies \& Collaborations (CISSC), Ministry of Science, Research and Technology of Iran.

\bibliographystyle{elsarticle-num}
\bibliography{bibliography}

\end{document}